\def\E{\E}
\newtheorem{assumption}{\hspace{0pt}\bf Assumption}
\newenvironment{proof}[1][Proof]{{\it #1. } }{\ \rule{0.5em}{0.5em}}
\newtheorem{proposition}{\hspace{0pt}\bf Proposition}
\newtheorem{theorem}{\hspace{0pt}\bf Theorem}
\newtheorem{definition}{\hspace{0pt}\bf Definition}
\title{Convolutional Neural Networks on Manifolds: \\From Graphs and Back}
\author{Zhiyang Wang \quad Luana Ruiz \quad Alejandro Ribeiro \thanks{Supported by NSF CCF 1717120, Theorinet Simons and ARL DCIST CRA under Grant W911NF-17-2-0181. Zhiyang and Alejandro are with Department of Electrical and Systems Engineering, University of Pennsylvania, Philadelphia, Pennsylvania, USA. Luana is with Simons-Berkeley Institute, California, USA.}
}
\begin{document}
\maketitle
\begin{abstract}
Geometric deep learning has gained much attention in recent years due to more available data acquired from non-Euclidean domains. Some examples include point clouds for 3D models and wireless sensor networks in communications. Graphs are common models to connect these discrete data points and capture the underlying geometric structure. With the large amount of these geometric data, graphs with arbitrarily large size tend to converge to a limit model -- the manifold. Deep neural network architectures have been proved as a powerful technique to solve problems based on these data residing on the manifold. In this paper, we propose a manifold neural network (MNN) composed of a bank of manifold convolutional filters and point-wise nonlinearities. We define a manifold convolution operation which is consistent with the discrete graph convolution by discretizing in both space and time domains. To sum up, we focus on the manifold model as the limit of large graphs and construct MNNs, while we can still bring back graph neural networks by the discretization of MNNs.  We carry out experiments based on point-cloud dataset to showcase the performance of our proposed MNNs. 
\end{abstract}

\begin{IEEEkeywords}
Manifold convolution, manifold neural networks, geometric deep learning 
\end{IEEEkeywords}

\section{Introduction}
\label{sec:label}
Convolutional neural networks (CNNs) have achieved impressive success in a wide range of applications, including but not limited to natural language processing \cite{wang2018application}, image denoising \cite{zhang2018ffdnet} and video analysis \cite{xu2019spatiotemporal}. Convolution operations are implemented to capture the local information and features based on the characteristics of the dataset. The remarkable success provides the support that CNNs are recognized as powerful techniques when processing traditional signals such as sound, image or video, which all lie in the Euclidean domains. As we have more access to larger scale data and stronger computing power, increasing attention is being paid to processing data lying in the non-Euclidean domains.

Many practical problems rely on non-Euclidean data. There is the case, for example, detection and recommendation in social networks \cite{aggarwal2020machine}, resource allocations over wireless networks \cite{wang2022learning}, point clouds for shape segmentation \cite{xie2020linking}. There have been works that extend the CNN architecture to non-Euclidean domains \cite{gama2019convolutional, defferrard2016convolutional, wang2021stability}, which reproduce the success of CNNs in Euclidean domains. Among these models, graphs are commonly used to construct the underlying data structure, while the graph size scales with the amount of data. In this work, we aim to construct CNNs on this more general model -- the manifold.

Graphs with well-defined limits are shown to converge to a manifold model \cite{belkin2008towards, bronstein2017geometric}, which makes the manifold capable of capturing properties for a series of graphs. The convolution operation is not taken for granted in non-Euclidean domains due to the lack of global parametrization and shift invariance. We define a manifold convolution operation based on the heat diffusion process controlled by the Laplace-Beltrami operator. We construct a manifold convolutional filter to process manifold signals. By cascading the layers consisting of manifold filter banks and nonlinearities, we can define the manifold neural networks (MNNs) as a deep learning framework on the manifold. To motivate the practical implementations of our proposed MNNs, we first discretize the MNN in the space domain by sampling points on the manifold. The proposed MNN can be transferred to this discretized manifold as a discretized MNN which converges to the underlying MNN when the manifold signal is bandlimited. We further carry out discretization in the time domain by sampling the filter impluse function in discrete and finite time steps. In this way, we can not only execute our proposed MNNs, but also recover the graph convolutions and graph neural networks \cite{gama2019convolutional}. This concludes our
thought starting from a graph sequence to the limit as a a manifold and back to the graphs. We finally verify the performance of our proposed MNN with a point cloud based model classification problem.

Related works include neural networks built on graphons \cite{ruiz2020graphon, ruiz2021graphonsignal}, which are limits of a sequence of dense graphs. Different from manifolds, graphons only represent the limits for graphs with unbounded degrees \cite{lovasz2012large}. Stability of MNNs have been studied considering the perturbations to the Laplace-Beltrami operator \cite{wang2021stability, wang2021stabilityrela}. A general framework for algebraic neural networks has been proposed for architectures unified with commutative algebras \cite{parada2020algebraic}.

The rest of the paper is organized as follows. We start with some preliminary concepts and define the manifold convolutions in Section \ref{sec:manifold_conv}. We construct the MNNs based on manifold filters in Section \ref{sec:mnn}. In Section \ref{sec:discret}, we implement the discretization in space and time domains to make the MNNs realizable which also bring back to graph convolutions. Our proposed MNN is verified in a model classification problem in Section \ref{sec:sim}. The conclusions are presented in Section \ref{sec:conclusion}.

\section{Manifold Convolution}
\label{sec:manifold_conv}
\subsection{Preliminary Definitions}
In this paper, we consider a compact, smooth, and differentiable $d$-dimensional submanifold $\ccalM$ embedded in $\reals^N$. The embedding induces a Riemannian structure \cite{gallier2020differential} on $\ccalM$ which endows a measure $\mu$ over the manifold. \emph{Manifold signals} supported on $\ccalM$ are smooth scalar functions $f: \ccalM \rightarrow \reals$. We consider manifold signals in a Hilbert space in which we define the inner product as
\begin{equation}\label{eqn:innerproduct}
    \langle f,g \rangle_{L^2(\ccalM)}=\int_\ccalM f(x)g(x) \text{d}\mu(x) 
\end{equation}
with the norm defined as $\|f\|^2_{L^2(\ccalM)}=\langle f, f\rangle_{L^2(\ccalM)}$. 

The manifold is locally Euclidean, which elicits \emph{intrinsic gradient} for differentiation as a local operator \cite{bronstein2017geometric}. The local Euclidean space around $x\in\ccalM$ containing all of the vectors tangent to $\ccalM$ at $x$ is denoted as tangent space $T_x\ccalM$. We use $T\ccalM$ to represent the disjoint union of all tangent spaces on $\ccalM$. The intrinsic gradient can thus be written as an operator $\nabla: L^2(\ccalM)\rightarrow L^2(T\ccalM)$ mapping scalar functions to tangent vector functions on $\ccalM$. The adjoint operator of intrinsic gradient is the \emph{intrinsic divergence} defined as $\text{div}: L^2(T\ccalM)\rightarrow L^2(\ccalM)$. Based on these two differentiation operators, the Laplace-Beltrami (LB) operator $\ccalL: L^2(\ccalM) \to L^2(\ccalM)$ can be defined as the intrinsic divergence of the intrinsic gradient \cite{rosenberg1997laplacian}, formally as
\begin{equation}\label{eqn:Laplacian}
    \ccalL f=-\text{div}\circ \nabla f=-\nabla \cdot \nabla f.
\end{equation}
Similar to the Laplacian operator in Euclidean domains or the Laplace matrix in graphs \cite{moon2012field}, the LB operator evaluates how much the function value at point $x$ differs from the average function value of its neighborhood \cite{bronstein2017geometric}.  

The LB operator provides a basis for expressing and solving physical tasks by Partial Differential Equations (PDEs). One of the remarkable applications is characterizing the heat diffusion over manifolds by the \emph{heat equation}
\begin{equation}\label{eqn:heat}
    \frac{\partial u(x,t)}{\partial t}+\ccalL u(x,t)=0 \text{,}
\end{equation}
where $u(x,t)\in L^2(\ccalM)$ measures the temperature at $x\in \ccalM$ at time $t\in\reals^+$. With initial condition given by $u(x,0)=f(x)$, the solution can be expressed as
\begin{equation}\label{eqn:heat-solution}
    u(x,t) = e^{-t \ccalL}f(x) \text{,}
\end{equation}
which provides an essential element to construct manifold convolution in the following.

Due to the compactness of $\ccalM$, the LB operator $\ccalL$ is self-adjoint and positive-semidefinite. This means that $\ccalL$ possesses a real positive spectrum $\{\lambda_i\}_{i=1}^\infty$ with the eigenvalues $\lambda_i$ and the corresponding eigenfunctions $\phi_i$ satisfying 
\begin{equation}\label{eqn:laplacian-decomp}
\ccalL \bm\phi_i =\lambda_i \bm\phi_i
\end{equation}

The eigenvalues are ordered as $0<\lambda_1\leq \lambda_2\leq \hdots$. According to Weyl's law \cite{arendt2009weyl}, we have $\lambda_i \propto i^{2/d}$ for a $d$-dimensional manifold. The orthonormal eigenfunctions $\phi_i$ form a general eigenbasis of $L^2(\ccalM)$ in the intrinsic sense. 

Since $\ccalL$ is a total variation operator, the eigenvalues $\lambda_i$ can be interpreted as the canonical frequencies and the eigenfunctions $\bm\phi_i$ as the canonical oscillation modes of $\ccalM$.


\subsection{Manifold Convolutional Filters}
Convolution operation is a powerful technique to give zero state response to any input signal with the filter impulse response of the system \cite{oppenheim1997signals}. Similar to time signals are processed with time convolutions and graph signals are processed by graph convolutions \cite{gama2020graphs}, we define manifold convolution with a filter impulse response $\tdh$ and manifold signal $f$. 

\begin{definition}[Manifold filter]
\label{def:manifold-convolution}
Let $\tdh:\reals^+ \to \reals$ and let $f \in L^2(\ccalM)$ be a manifold signal. The manifold filter with impulse response $\tdh$, denoted $\bbh$, is given by
\begin{align} \label{eqn:convolution-conti}
   g(x) = (\bbh f)(x) := \int_0^\infty \tdh(t)u(x,t)\text{d}t 
\end{align}
where $u(x,t)$ is the solution of the heat equation \eqref{eqn:heat} with $u(x,0)=f(x)$. Substitute the solution $u(x,t)$ with \eqref{eqn:heat-solution}, and we can derive a parametric form of $\bbh$ as 
\begin{equation} \label{eqn:manifold-conv-spatial}
   g(x) = (\bbh f)(x) =\int_0^\infty \tdh(t)e^{-t\ccalL}f(x)\text{d}t =  \bbh(\ccalL)f(x) \text{.}
\end{equation}
\end{definition}

Manifold filters are local spatial operators operating directly on points on the manifold based on the LB operator. The exponential term $e^{-t\ccalL}$ can be interpreted as a shift operator like the time delay in a Linear-Time Invariant (LTI) filter \cite{oppenheim1997signals} and the graph shift in a Linear-Shift Invariant (LSI) graph filter \cite{gama2020graphs}. In fact, manifold filters can recover graph filters by discretization, which we discuss thoroughly in Section \ref{sec:discret}.

The LB operator $\ccalL$ possesses the eigendecomposition $\{\lambda_i, \phi_i\}_{i=1}^\infty$. Eigenvalue $\lambda_i$ can be interpreted as the canonical frequency and the eigenfunction $\phi_i$ as the canonical oscillation mode. By projecting a manifold signal $f$ onto the eigenfunction, we can write the \emph{frequency representation} $\hat{f}$ as
\begin{equation}\label{eqn:f-decomp}
[\hat{f}]_i= \langle f, \bm\phi_i \rangle_{L^2(\ccalM)} = \int_\ccalM f(x) \bm\phi_i(x) \text{d} \mu(x) \text{.}
\end{equation} 

\begin{definition}[Bandlimited manifold signals]
\label{def:bandlimited-manifold}
A manifold signal is defined as $\lambda_M$-bandlimited with $\lambda_M>0$ if $[\hat{f}]_i=0$ for all $i$ such that $\lambda_i> \lambda_M$.
\end{definition}

The spectrum and eigenbasis of the LB opertor help to understand the frequency behavior of the manifold filter $\bbh(\ccalL)$. The frequency representation of manifold filter output $g$ can be similarly written as
\begin{equation}
    [\hat{g}]_i = \int_\ccalM \int_0^\infty \tdh(t) e^{-t\ccalL} f(x) \text{d} t \bm\phi_i(x) \text{d} \mu(x) \text{.}
\end{equation}
By substituting $e^{-t\ccalL}\phi_i = e^{-t\lambda_i}\phi_i$, we can get
\begin{equation}\label{eqn:projection}
    [\hat{g}]_i = \int_0^\infty \tdh(t) e^{-t\lambda_i} \text{d}t  [\hat{f}]_i \text{.}
\end{equation}
The function solely dependent on $\lambda_i$ is defined as the \emph{frequency response} of the filter $\bbh(\ccalL)$.

\begin{definition}[Frequency response]
\label{def:frequency-response}
The frequency response of the filter $\bbh(\ccalL)$ is given by
\begin{equation}\label{eqn:operator-frequency}
\hat{h}(\lambda)=\int_0^\infty \tdh(t) e^{- t \lambda  }\text{d}t \text{,}
\end{equation}
which leads \eqref{eqn:projection} to
$
[\hat{g}]_i = \hat{h}(\lambda_i)[\hat{f}]_i \text{.}$
\end{definition}

Definition \ref{def:frequency-response} indicates that the frequency response of manifold filter is point-wise in frequency domain. Combining the frequency representation of $\hat{g}$ over the whole spectrum, we can obtain the frequency representation of manifold filter $\bbh$ as
\begin{equation}
\label{eqn:filter-frequency}
g = \bbh(\ccalL)f = \sum_{i=1}^\infty \hat{h}(\lambda_i)\langle f,\phi_i \rangle_{L^2(\ccalM)} \phi_i.
\end{equation}

The well-defined manifold filters make an important building block in Manifold Neural Networks (MNNs), which we show in the following section.

\section{Manifold Neural Networks}
\label{sec:mnn}

Manifold neural networks (MNNs) augment manifold filters with a point-wise nonlinear activation function. We extend the definition to a mapping on the manifold $\sigma: L^2(\ccalM)\rightarrow L^2(\ccalM)$ as an independent application on each point of the manifold. In a single-layer MNN, the manifold signal $f$ is passed through a manifold filter followed by a point-wise nonlinearity as
\begin{equation}\label{eqn:mnn-1layer}
f_1(x) = \sigma\Bigg( \bbh(\ccalL) f(x)\Bigg),
\end{equation}
which can be seen as a basic nonlinear processing of the input manifold signal. By stacking this procedure in layers, a multilayer MNN can be constructed which can be formally written as a function composition. The output manifold signal of a layer becomes the input signal of the next layer. Let $l=1,2,\cdots,L$ stand for the index for the layer and $\bbh_l(\ccalL)$ as the manifold filter on each layer. For a specific layer $l$, filter $\bbh_l(\ccalL)$ takes the output $f_{l-1}(x)$ as the input produces the output of layer $l$ as 
\begin{equation}\label{eqn:mnn-multilayer}
f_l(x) = \sigma\Bigg( \bbh_l(\ccalL) f_{l-1}(x)\Bigg),
\end{equation}
where $f_0(x)=f(x)$ as the given input manifold signal. After a recursive applications through $L$ layers, we can get the output of the MNN as $f_L(x)$.

When considering multiple features in each layer to increase the representation power of MNN, the manifold filters map the input $F_{l-1}$ features from layer $l-1$ to $F_l$ intermediate features in layer $l$ with a bank of manifold filters, i.e.,
\begin{equation}\label{eqn:mnn}
y_l^p(x) =  \sum_{q=1}^{F_{l-1}} \bbh_l^{pq}(\ccalL) f_{l-1}^q(x),
\end{equation}
where $\bbh_l^{pq}(\ccalL)$ is the filter mapping the $q$-th feature from layer $l-1$ to the $p$-th feature of layer $l$, for $1\leq q\leq F_{l-1}$ and $1\leq p\leq F_{l}$. The intermediate features are then processed by the nonlinearity $\sigma$ as
\begin{equation}\label{eqn:mnn}
f_l^p(x) = \sigma\Bigg(y_l^p(x) \Bigg).
\end{equation}
The output of layer $L$ is the output of MNN with $F_L$ features. To represent the MNN more precisely, we gather the impulse responses of all the manifold filters $\bbh_l^{pq}$ as a function set $\bbH$ and define the MNN as a map $\bbPhi(\bbH,\ccalL, f)$. This map is parameterized by both the filter functions $\bbH$ and the LB operator $\ccalL$.

\section{Discretization in Space and Time}
\label{sec:discret}
MNNs are built based on manifold convolutional filters (Definition \ref{def:manifold-convolution}) processing manifold signals over an infinite time horizon. In practice, the continuous architectures cannot be implemented directly. In this section, we discuss the practical application of MNNs \eqref{eqn:mnn} by discretization in both space and time domains. 

\subsection{{Discretization in the Space Domain}}
Realistically, the underlying manifold along with its LB operator is inaccessible directly. It is common to use sampling points to form a point cloud as an approximation of the manifold structure. 

With the knowledge of the coordinates of the sampling points, the underlying manifold structure can be approximated by a geometric graph structure which can also be seen as a discretized manifold \cite{dunson2021spectral, belkin2008towards}. The graph Laplacian is hence the approximation of the LB operator, whose convergence to the LB operator as the number of sampling points increases has been shown explicitly \cite{calder2019improved, dunson2021spectral}.

Specifically, we model the set of $n$ sampling points as $X= \{x_1, x_2,\dots, x_n\}$ which are sampled i.i.d. from measure $\mu$ of manifold $\ccalM\subset \reals^N$. A complete weighted symmetric graph $\bbG_n$ can be constructed by seeing the sampling points as the vertices of the graph and the Euclidean distance between pairs of points as the edge weights. To be more precise, the edge weight $w_{ij}$ connecting point $x_i$ and $x_j$ is given by
\begin{equation}\label{eqn:weight}
    w_{ij}=\frac{1}{n}\frac{1}{t_n(4\pi t_n)^{k/2}}\exp\left(-\frac{\|x_i-x_j\|^2}{4t_n}\right),
\end{equation}
with $\|x_i-x_j\|$ representing the Euclidean distance between $x_i$ and $x_j$. Parameter $t_n$ controls the chosen Gaussian kernel \cite{belkin2008towards}. The adjacency matrix of $\bbG_n$ is thus defined as $[\bbA_n]_{ij}=w_{ij}$ for $1 \leq i,j\leq n$ with $\bbA_n \in \reals^{n\times n}$. The correspondent graph Laplacian matrix $\bbL_n$ \cite{merris1995survey} thus can be defined as
\begin{equation}\label{eqn:Laplacian-matrix}
    \bbL_n = \mbox{diag}(\bbA_n \boldsymbol{1})-\bbA_n,
\end{equation}
which is the approximated LB operator of the discretized manifold. We define a uniform sampling operator $\bbP_n: L^2(\ccalM)\rightarrow L^2(X)$ to discretize a bandlimited manifold signal $f$ (Definition \ref{def:bandlimited-manifold}) as a graph signal $\bbx_n\in \reals^N$, denoted as
\begin{equation}
\label{eqn:sampling}
    \bbx_n = \bbP_n f\text{ with }[\bbx_n]_i = f(x_i), \quad  x_i \in X,
\end{equation}
which indicates that the $i$-th element of $\bbx_n$ is the evaluation of manifold signal $f$ at the sampling point $x_i$.

The definition of manifold filter (Definition \ref{def:manifold-convolution}) has shown that it can be parameterized by the LB operator. Likewise, we can replace the LB operator with the discrete Laplacian operator and turns $\bbh$ to a discrete manifold filter, i.e.,
\begin{equation}\label{eqn:sample_manifold_convolution}
    \bbz_n = \int_{0}^\infty \tdh(t) e^{-t\bbL_n} \text{d}t \bbx_n=\bbh(\bbL_n)\bbx_n,\; \bbx_n, \bbz_n \in\reals^n, 
\end{equation}
where $\bbz_n$ is the output discrete graph signal. By cascading the layers containing discrete manifold filters and point-wise nonlinearities, we can construct a neural network on this discreteized manifold as
\begin{equation}\label{eqn:dis-mnn}
    \bbx_{n,l}^p = \sigma\left(\sum_{q=1}^{F_{l-1}} \bbh_l^{pq}(\bbL_n) \bbx^q_{n,l-1} \right),
\end{equation}
where $\bbh_l^{pq}$ in the filter bank maps the features in the $l-1$-th layer to the $p$-th feature in the $l$-th layer. Each layer contains $F_l$ features. By gathering all the filter impulse functions as a set $\bbH$, we can represent the neural network on the discrete manifold as a map $\bm\Phi(\bbH,\bbL_n,\bbx_n)$.

As the number of sampling points goes to infinity, the discrete graph signal $\bbx_n$ converges to the manifold signal $f$ while the discrete graph Laplacian matrix $\bbL_n$ also converges to the LB operator $\ccalL$ of the underlying manifold \cite{belkin2008towards}. Combining these convergence, we conclude that the outputs of the neural network on the discrete manifold converge to the outputs of the continuous manifold neural network, which we state explicitly as the following proposition.

\begin{proposition}
\label{prop:convergence} 
Let $X=\{x_1, x_2,...x_n\}$ be a set of $n$ sampling points i.i.d. from a bandlimited manifold signal $f$ of $d$-dimensional manifold $\ccalM \subset \reals^N$, sampled by an operator $\bbP_n$ \eqref{eqn:sampling}. Let $\bbG_n$ be a discrete approximation of $\ccalM$ constructed from $X$ with weight values set as \eqref{eqn:weight} with $t_n = n^{-1/(d+2+\alpha)}$ and $\alpha>0$. Let $\bm\Phi(\bbH, \cdot, \cdot)$ be the neural network parameterized by the LB operator $\ccalL$ of the manifold $\ccalM$ \eqref{eqn:mnn} or by the discrete Laplacian operator $\bbL_n$ of the discretized manifold $\bbG_n$. Under the assumption that the frequency response of filters in $\bbH$ are Lipschitz continuous, it holds for each $i=1,2,\hdots,n$ that
\begin{equation}
    \lim_{n\rightarrow \infty} \|\bm\Phi(\bbH, \bbL_n, \bbP_n f) - \bbP_n\bm\Phi(\bbH, \ccalL,f)\|_{L^2(\bbG_n)} = 0,
\end{equation}
with the limit taken in probability.
\end{proposition}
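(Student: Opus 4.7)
The plan is to prove the convergence by induction on the number of layers $L$, with the single-layer convergence of one manifold filter applied to a bandlimited signal serving as the atomic building block, and Lipschitz-nonlinearity arguments to propagate convergence through successive layers.

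For the base case (a single filter on the bandlimited input $f$), I would expand both sides in their respective spectral bases: $\bbh(\ccalL)f=\sum_i \hat h(\lambda_i)\langle f,\phi_i\rangle_{L^2(\ccalM)}\phi_i$ and $\bbh(\bbL_n)\bbP_n f=\sum_j \hat h(\lambda_j^n)\langle \bbP_n f,\phi_j^n\rangle \phi_j^n$, where $(\lambda_j^n,\phi_j^n)$ are the eigenpairs of $\bbL_n$. Since $f$ is $\lambda_M$-bandlimited, Weyl's law $\lambda_i\propto i^{2/d}$ implies only finitely many terms survive in the continuous expansion, so I can truncate the discrete expansion at a matching index and control the spectral tail separately. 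A triangle inequality then splits the error into: (i) an eigenvalue-perturbation term, controlled by the Lipschitz constant of $\hat h$ times $|\lambda_i^n-\lambda_i|$; (ii) an eigenvector-perturbation term, controlled by $\|\phi_i^n-\bbP_n\phi_i\|_{L^2(\bbG_n)}$; and (iii) a sampling residual that vanishes because $\bbP_n\phi_i\to\phi_i$ for smooth $\phi_i$. The in-probability spectral convergence of graph Laplacians built as in \eqref{eqn:weight} with $t_n=n^{-1/(d+2+\alpha)}$, as established by Belkin--Niyogi and later refinements, supplies the convergence of $\lambda_i^n$ and $\phi_i^n$ to their LB counterparts that feeds directly into (i)--(iii).

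For the inductive step, I would use that the pointwise nonlinearity $\sigma$ is (standardly) $1$-Lipschitz, so convergence at the output of layer $l-1$ survives the application of $\sigma$ and is transferred to the input of layer $l$. Then I would employ the decomposition
\begin{equation*}
\|\bbh(\bbL_n)\bbx - \bbP_n\bbh(\ccalL)g\|_{L^2(\bbG_n)} \leq \|\bbh(\bbL_n)\|\cdot\|\bbx - \bbP_n g\|_{L^2(\bbG_n)} + \|\bbh(\bbL_n)\bbP_n g - \bbP_n\bbh(\ccalL)g\|_{L^2(\bbG_n)},
\end{equation*}
where the first summand vanishes by the inductive hypothesis (boundedness of $\|\bbh(\bbL_n)\|$ follows from boundedness of $\hat h$), and the second summand is handled by the spectral argument used in the base case. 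Summation over the $F_l F_{l-1}$ parallel filters at each layer and iteration through all $L$ layers yields the stated convergence.

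The hardest step will be controlling the eigenvector perturbation. Eigenvectors can rotate freely within any eigenspace of multiplicity greater than one, so uniform convergence demands either a nondegeneracy hypothesis on the spectrum below $\lambda_M$ or a Davis--Kahan-type argument applied eigenspace by eigenspace; the bandlimit keeps the number of relevant eigenspaces finite, but each must be treated individually. A secondary subtlety is that the intermediate signals $f_{l-1}$ need not remain bandlimited after $\sigma$ is applied, so the second summand in the display above must be controlled without invoking bandlimitedness of $g$; this requires either a slightly stronger spectral-convergence statement or an approximation-by-bandlimited-signals step. Finally, aligning the $1/n$-weighted $L^2(\bbG_n)$ inner product (implicit in \eqref{eqn:weight}) with the $L^2(\ccalM)$ inner product, so that $\bbP_n$ acts as an asymptotic isometry on smooth signals, is a further technical ingredient whose verification is routine but necessary.
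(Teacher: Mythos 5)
Your proposal follows essentially the same route as the paper's proof: a layer-by-layer triangle-inequality decomposition into a propagation term (handled by the normalized Lipschitz nonlinearity and the boundedness of the filter's frequency response) plus a single-filter discretization error, which is then expanded in the finite spectral basis afforded by the bandlimit and controlled via Belkin--Niyogi spectral convergence of the eigenvalues and eigenfunctions together with the Lipschitz continuity of $\hat h$ and a sampling-consistency result for the inner products. The subtleties you flag --- eigenspace multiplicity and the fact that intermediate features need not remain bandlimited after $\sigma$ --- are genuine and are in fact passed over silently in the paper's own argument, so noting them is a point in your favor rather than a divergence of method.
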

\begin{proof}
See Appendix in \cite{wang2022convolution}.
\end{proof}

Proposition \ref{prop:convergence} claims that neural networks constructed from the discrete Laplacian $\bbL_n$ can give a good approximation of the manifold neural networks on bandlimited manifold signals.


\subsection{Discretization in the Time Domain}
Definition \ref{def:manifold-convolution} shows that the manifold filter $\bbh$ is determined by the impulse response function $\tilde{h}(t)$. Therefore, if we want to learn the MNN, we need to learn the continuous function $\tilde{h}(t)$, which is infeasible without enough prior knowledge. To make this learning practical, we discretize function $\tilde{h}(t)$ in the continuous time domain with a fixed sampling interval $T_s$. We replace the filter response function with a series of coefficients $h_k = \tilde{h}(k T_s)$, $k =0 ,1, 2\dots$. For simplicity, we settle the sampling interval as a unit $T_s=1$. The manifold convolution in the discrete time domain can be written as
\begin{equation}
\label{eqn:manifold_convolution_discrete}
    \bbh(\ccalL) f(x)= \sum_{k=0}^{\infty} h_k e^{-k\ccalL}f(x), 
\end{equation}
where $\{h_k\}_{k=0}^\infty$ are called filter coefficients or taps.

Considering the infinite number of the filter coefficients, the realization of $\bbh(\ccalL) f(x)$ is still impractical. We fix $K$ samples over the time horizon and reformulate \eqref{eqn:manifold_convolution_discrete} as
\begin{equation}
\label{eqn:manifold_convolution_discrete}
    \bbh(\ccalL) f(x)= \sum_{k=0}^{K-1} h_k e^{-k\ccalL}f(x),
\end{equation}
which corresponds to the form of a finite impulse response (FIR) filter with shift operator $e^{-\ccalL}$. Similarly, we can discretize the filter on the discretized manifold in \eqref{eqn:sample_manifold_convolution} over the time domain, which leads to a practical manifold filter on a discretized manifold and in the discrete time domain, i.e.,
\begin{equation}
\label{eqn:discrete_manifold_convolution_discrete}
  \bbz_n=  \bbh(\bbL_n) \bbx = \sum_{k=0}^{K-1} h_k e^{-k\bbL_n}\bbx_n.
\end{equation}
We can observe that this discretized manifold filter recovers the form of graph convolution \cite{gama2020graphs} with $e^{-\bbL_n}$ seen as the graph shift operator. By replacing the filter $\bbh_l^{pq}(\bbL_n)$ in \eqref{eqn:dis-mnn} with \eqref{eqn:discrete_manifold_convolution_discrete}, we further recover the graph neural network (GNN) architecture from MNNs. Till now we have completed the process of building MNNs from graphs and back. Manifolds can be seen as the limits of graphs and thus a tool for analyzing large graphs. CNNs can be constructed on manifolds to address problems in the limit sense. By further discretizing the constructed MNNs for practical implementation, we can bring the problem back to the graphs.

\section{Simulations}
\label{sec:sim}

\begin{figure*}[t]
\centering
\includegraphics[trim=130 0 80 0,clip,width=0.12\textwidth]{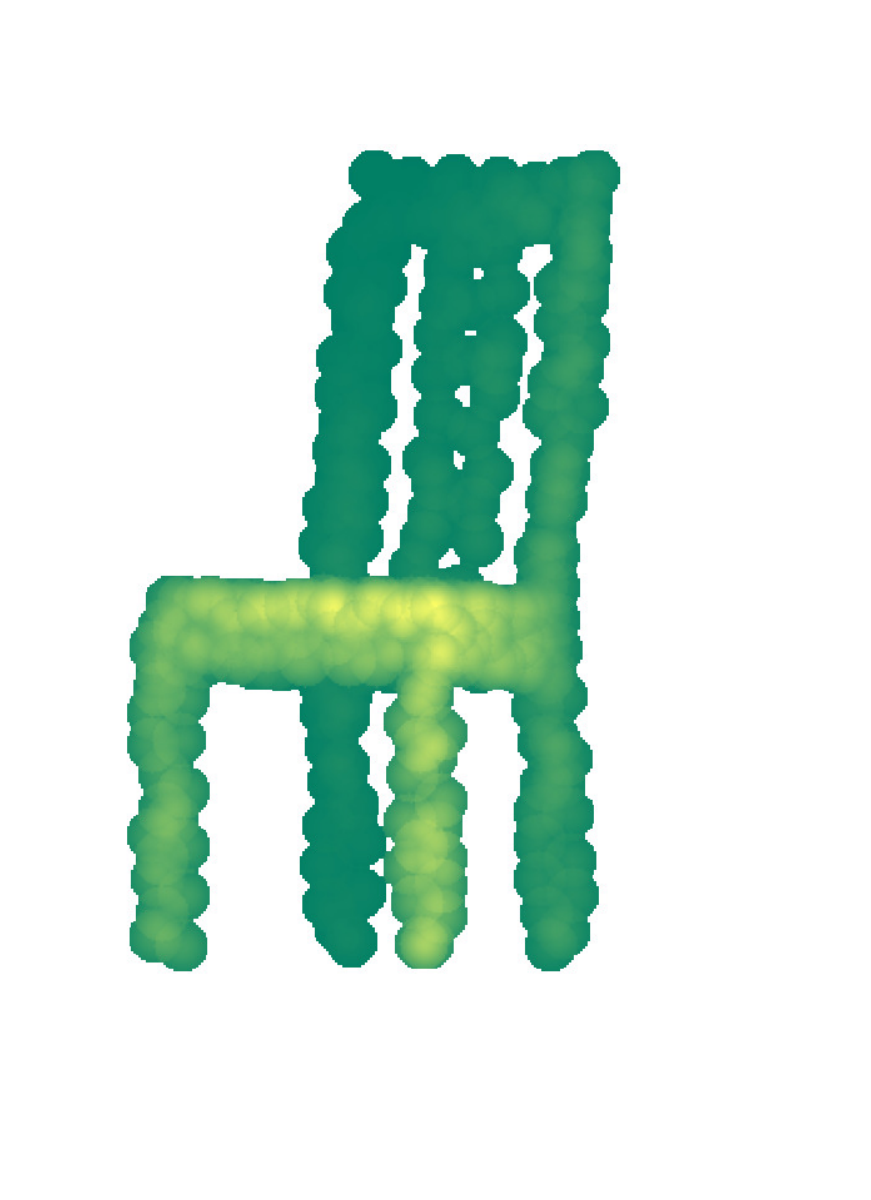}
\includegraphics[trim=90 0 100 0,width=0.12\textwidth]{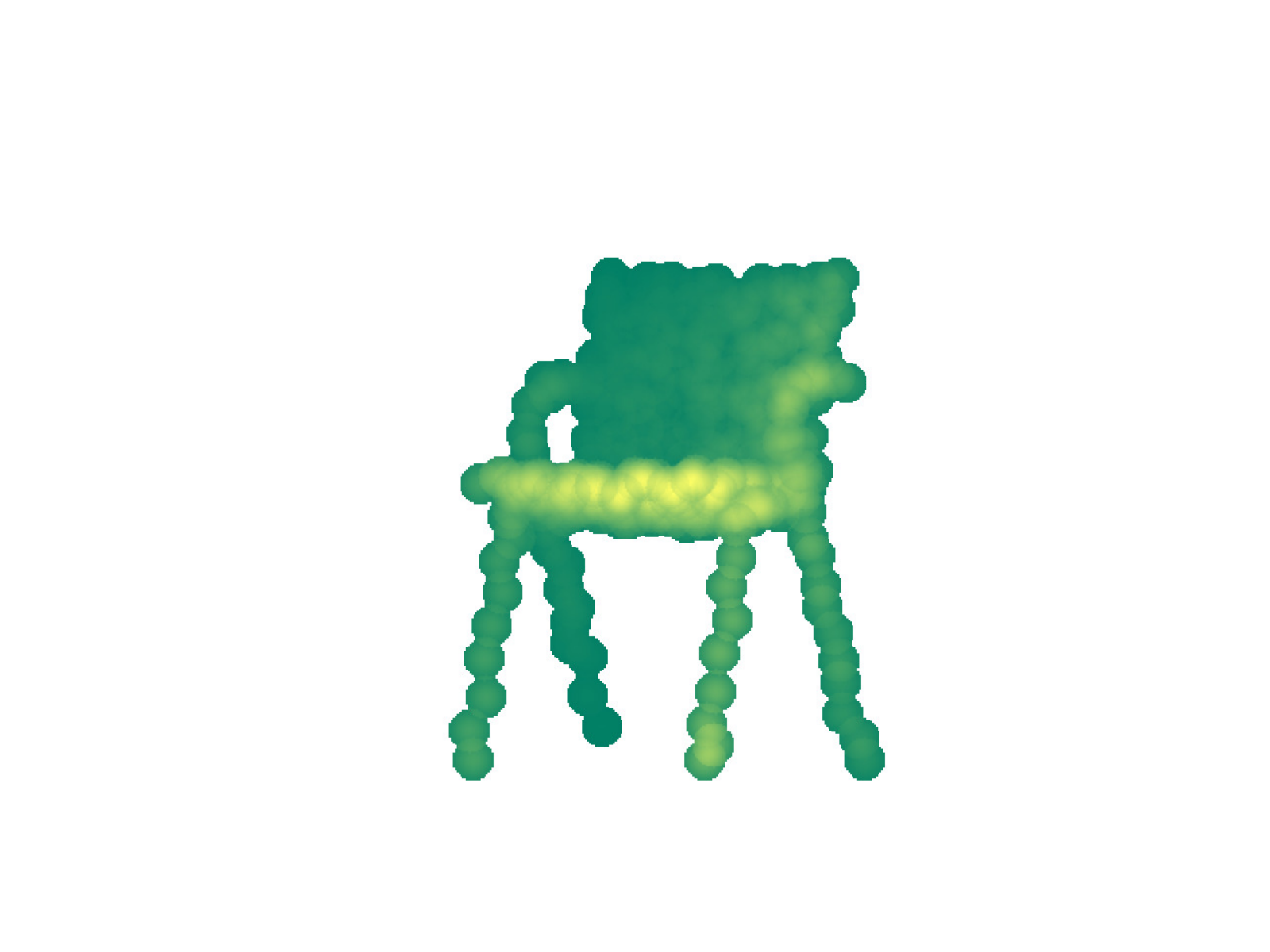}  
\includegraphics[trim=60 0 100 0,width=0.15\textwidth]{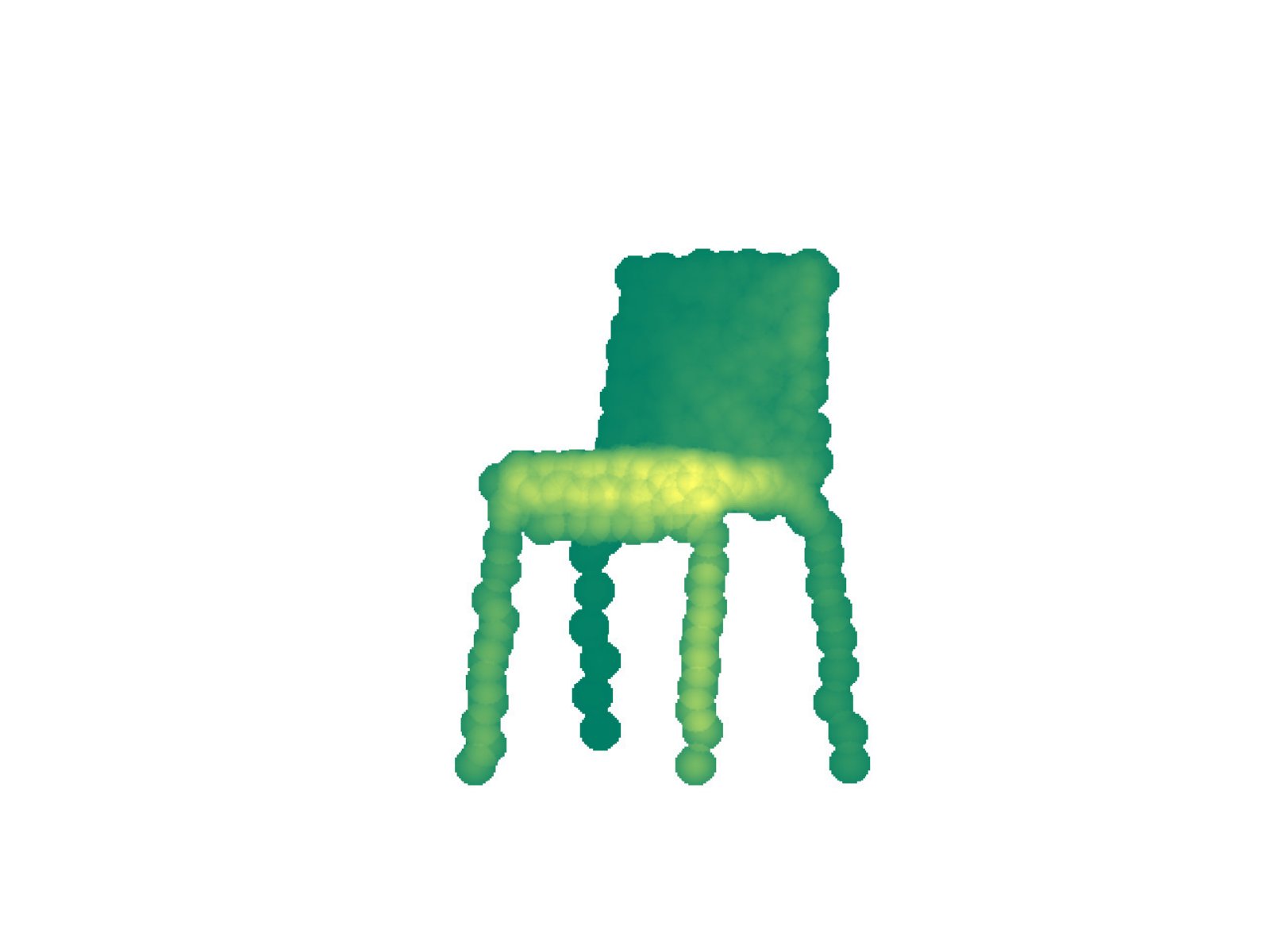} 
\includegraphics[trim=60 0 50 0,width=0.16\textwidth]{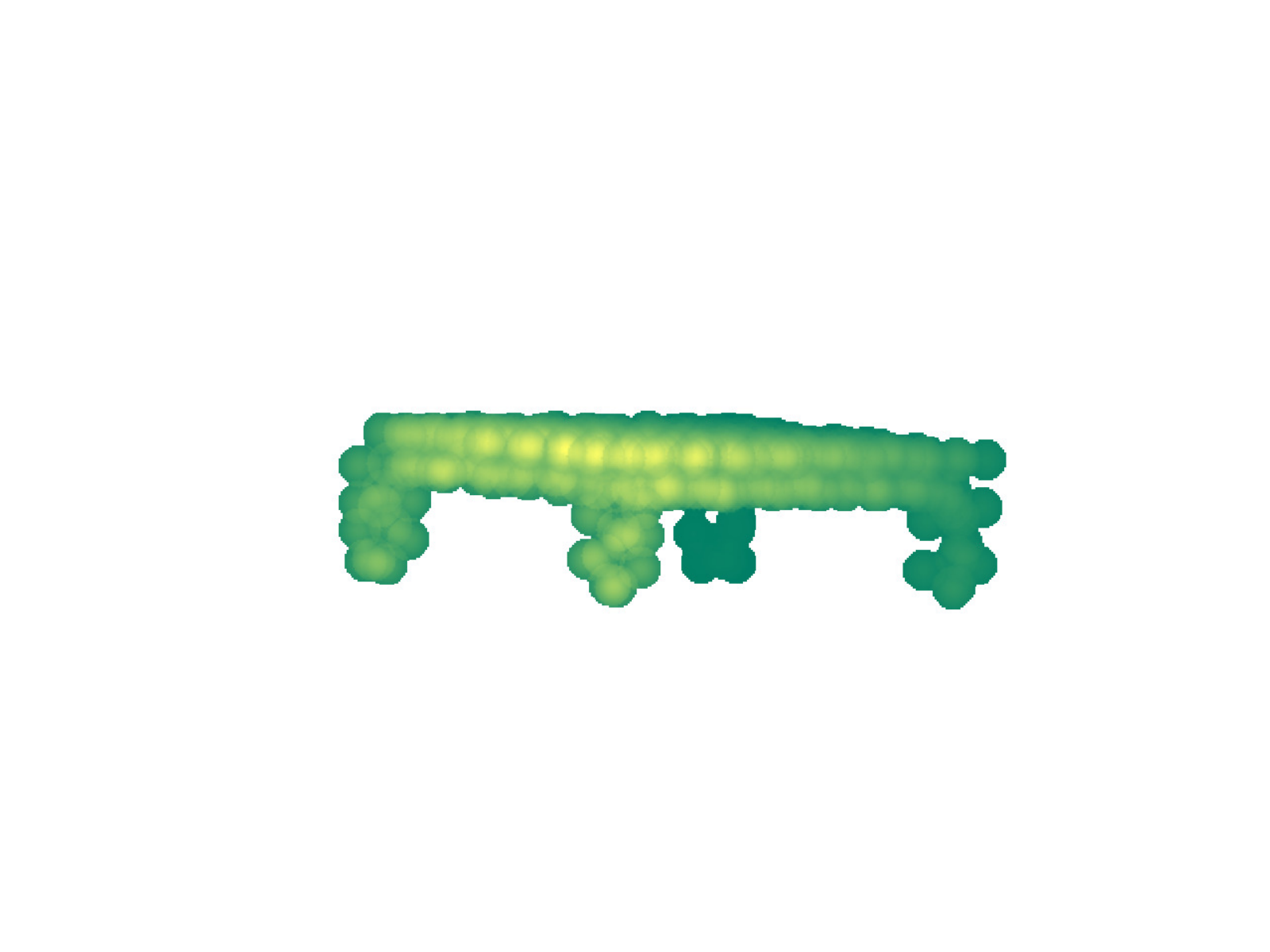}  
\includegraphics[trim=60 0 100 0,width=0.12\textwidth]{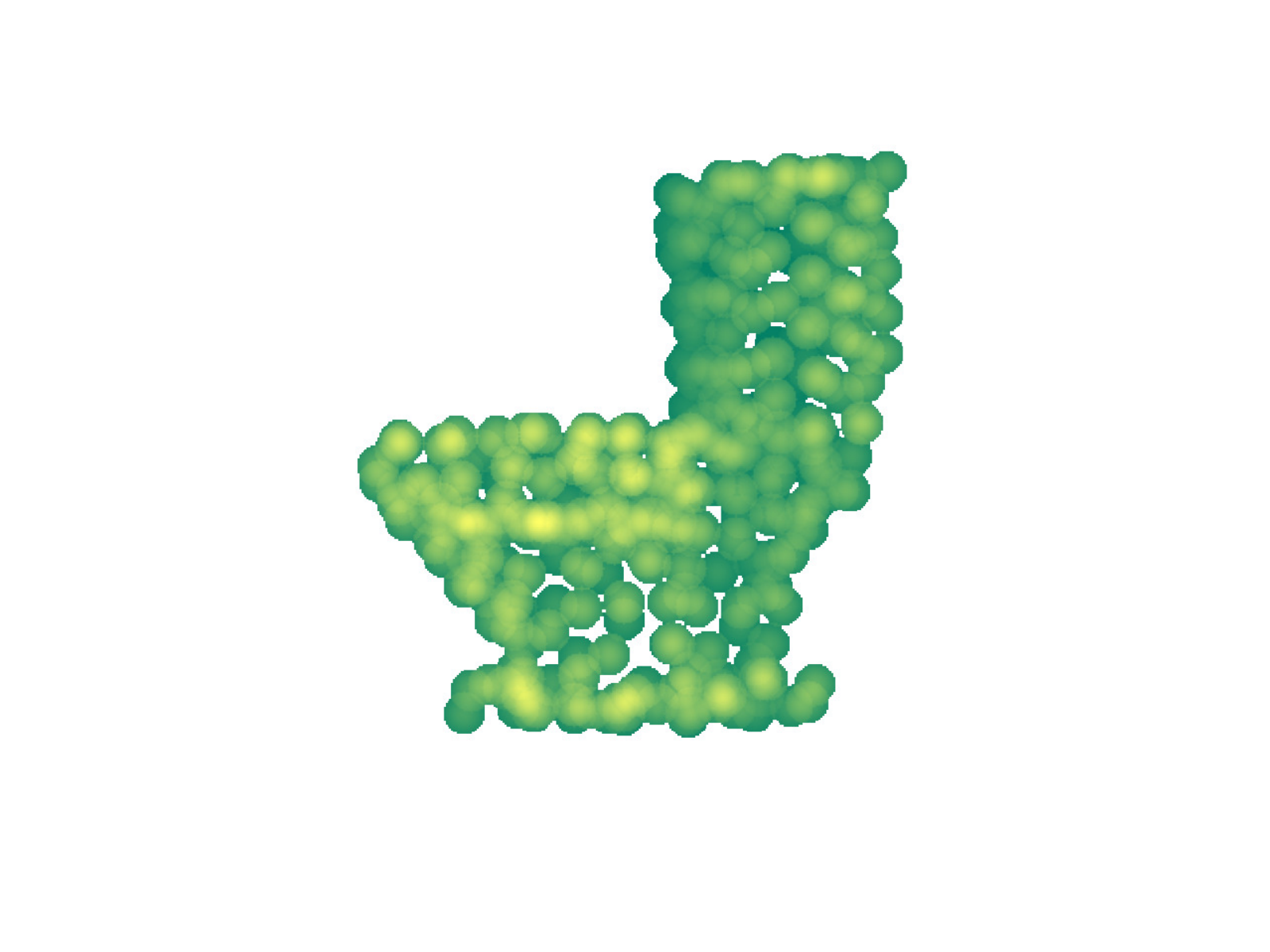}  
\includegraphics[trim=60 0 100 0,width=0.12\textwidth]{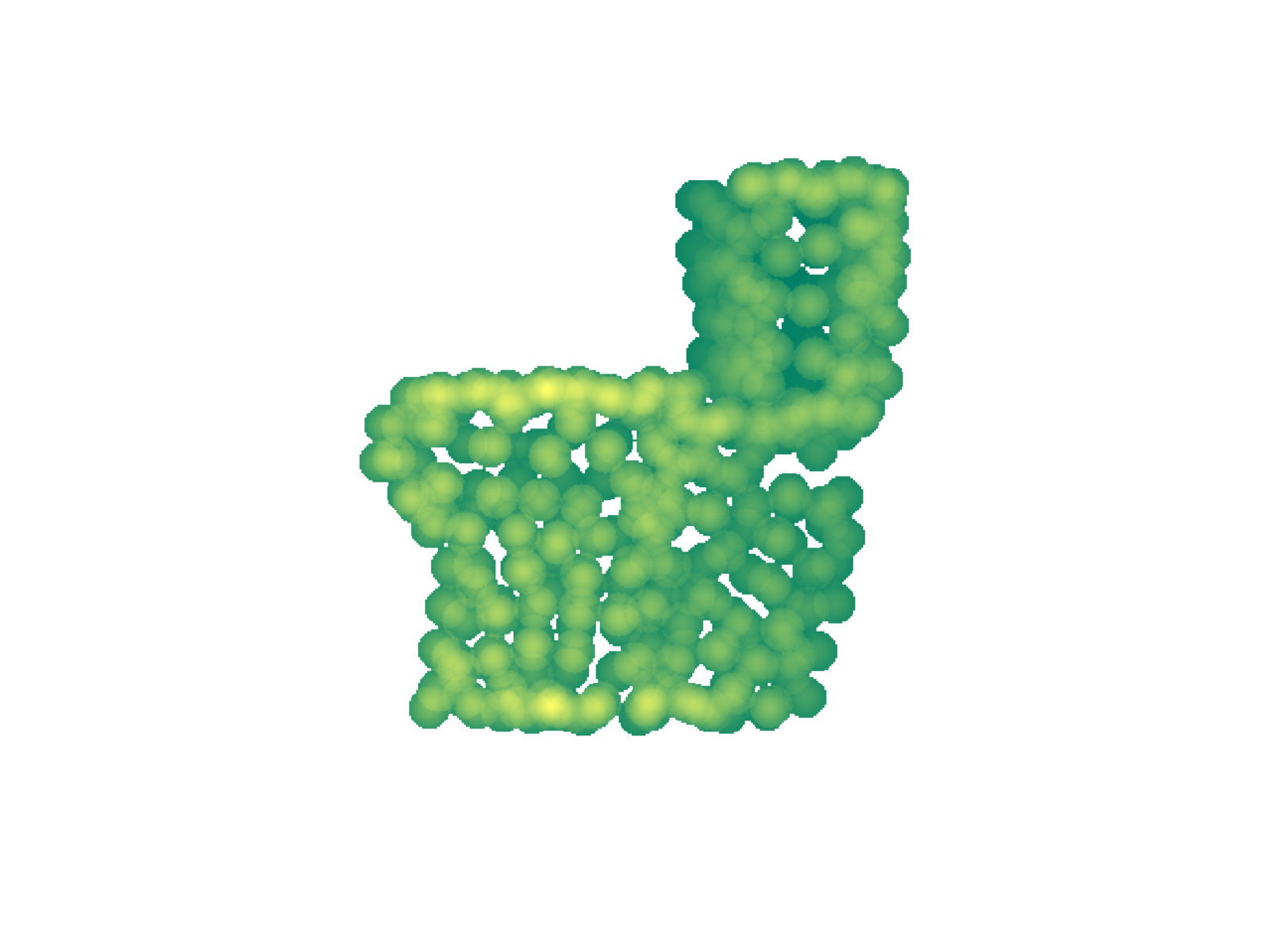} 

\caption{Point cloud models with 300 sampling points in each model }
\label{fig:points}
\end{figure*}

We evaluate the performance of our proposed MNN structure with practical implementations in \eqref{eqn:dis-mnn} and \eqref{eqn:discrete_manifold_convolution_discrete}, i.e., we approximate the unrealizable MNN with GNNs. We carry out a classification problem with ModelNet10 dataset \cite{wu20153d}. The dataset contains 3,991 meshed CAD models from 10 categories for training and 908 models for testing. We sample 300 points from each meshed model uniformly and construct a point cloud. Explicitly, we see each point as a node and the edge weights are calculated with \eqref{eqn:weight}. Our goal is to identify the models for chair from other models as illustrated in Figure \ref{fig:points}. 

The point cloud models are here approximated by the dense graphs. The edge weights are calculated according to \eqref{eqn:weight} with $t_n$ set as 0.3. The Laplacian matrix is calculated with \eqref{eqn:Laplacian-matrix} for each point cloud model. We implement several different architectures to solve the classification problem, including graph filters (GF) and graph neural networks (GNN) with 1 and 2 layers respectively. The single layer architectures contain $F_0=3$ input features which are the coordinates of every point in 3d space. The output features are set as $F_1=64$. The architectures with 2 layers include another layer with $F_2=32$ features. The filter taps in each layer are set as $K=5$. The nonlinearity is a ReLu function. All the architectures are concluded with a linear readout layer mapping the output features to a binary scalar to estimate the classifications.

We train all the architectures with an ADAM optimizer \cite{kingma2014adam} with the learning rate as 0.005 and decaying factor as 0.9,0.999 to minimize the entropy loss. The training model set is divided into batches with 10 models over 40 epochs. We repeat 5 sampling realizations for all the architectures and calculate the average classification error rates as well as the standard deviation in Table \ref{tb:results}. We observe that GNNs perform better than the GFs while architectures with more layers learn more accurate models with more parameters learned.

\begin{table}[h]
\centering
\begin{tabular}{l|c} \hline
Architecture    & error rates   \\ \hline
GNN1Ly	& $8.04 \% \pm 0.88\% $   \\ \hline
GNN2Ly		& $4.30\% \pm 2.64\%$   \\ \hline
GF1Ly		& $13.77\% \pm 6.87\%$   \\ \hline
GF2Ly	& $12.22\% \pm 7.89\%$   \\ \hline
\end{tabular}
\caption{Classification error rates for model `chair' in the test dataset. Average over 5 data realizations. }
\label{tb:results}
\vspace{-3mm}
\end{table}

\section{Conclusion}
\label{sec:conclusion}

In this paper, we have studied the limit of a graph sequence as a manifold. We have defined a manifold convolution operation with the heat diffusion controlled by the Laplace-Beltrami operator. We have further constructed a manifold neural network architecture. To realize the MNN in practice, we have carried out discretization in both space and time domains which recovers the convolution and neural networks on graphs. We finally verified the performance of MNN with a model classification problem.

\urlstyle{same}
\bibliographystyle{IEEEtran}
\bibliography{references}

\appendix
 {\section{Appendix}
 \subsection{Proof of Proposition \ref{prop:convergence}}
 \label{app:convergence}
For ease of presentation, we denote the norm $\|\cdot\|_{L^2(\bbG_n)}$ as $\|\cdot\|$ for short. Considering that the discrete points $\{x_1,x_2,\hdots,x_n\}$ are uniformly sampled from manifold with measure $\mu$, the empirical measure associated with $\text{d}\mu$ is defined as $p_n=\frac{1}{n}\sum_{i=1}^n \delta_{x_i}$, where $\delta_{x_i}$ is the dirac measure supported on $x_i$. Similar to the inner product defined in the $L^2(\ccalM)$ space \eqref{eqn:innerproduct}, the inner product on $L^2(\bbG_n)$ is denoted as
 \begin{equation}
     \langle u, v\rangle_{\bbG_n}=\int u(x)v(x)\text{d}p_n=\frac{1}{n}\sum_{i=1}^n u(x_i)v(x_i).
 \end{equation}
 The norm in $L^2(\bbG_n)$ is therefore $\|u\|^2_{\bbG_n} = \langle u, u \rangle_{\bbG_n}$, with $u,v \in L^2(\ccalM)$. For signals $\bbu,\bbv \in L^2(\bbG_n)$, the inner product is therefore $\langle \bbu,\bbv \rangle_{L^2(\bbG_n)} = \frac{1}{n}\sum_{i=1}^n [\bbu]_i[\bbv]_i$.
 
We introduce the definition of Lipschitz continuous manifold filters in Definition \ref{def:lipschitz} and the assumption on the amplitude of frequency response.
 
 \begin{definition}[Manifold filter with Lipschitz continuity] \label{def:lipschitz}
A manifold filter is $C$-Lispchitz if its frequency response is Lipschitz continuous with constant $C$, i.e,
\begin{equation}
    |\hat{h}(a)-\hat{h}(b)| \leq C |a-b|\text{ for all } a,b\in (0,\infty)\text{.}
\end{equation}
\end{definition}

\begin{assumption}[Non-amplifying filters] \label{ass:filter_function}
A manifold filter is non-amplifying if for all $\lambda\in(0,\infty)$, its frequency response $\hhath$ satisfies $|\hhath(\lambda)|\leq 1$.
\end{assumption}

Note that this assumption is reasonable, because the filter function $\hhath(\lambda)$ can always be normalized.

 We first import the existing results from \cite{belkin2006convergence} which indicates the spectral convergence of the constructed Laplacian operator based on the discretized manifold to the LB operator of the underlying manifold.
 \begin{theorem}[Theorem 2.1 \cite{belkin2006convergence}]
 \label{thm:convergence}
 Let $X=\{x_1, x_2,...x_n\}$ be a set of $n$ sampling points i.i.d. from manifold signal $f$ of $d$-dimensional manifold $\ccalM \subset \reals^N$, sampled by an operator $\bbP_n$ \eqref{eqn:sampling}. Let $\bbG_n$ be a discrete approximation of $\ccalM$ constructed from $X$ with weight values set as \eqref{eqn:weight} with $t_n = n^{-1/(d+2+\alpha)}$ and $\alpha>0$. Let $\bbL_n$ be the graph Laplacian of $\bbG_n$ and $\ccalL$ be the Laplace-Beltrami operator of $\ccalM$. Let $\lambda_{i}^n$ be the $i$-th eigenvalue of $\bbL_n$ and $\bm\phi_{i}^n$ be the corresponding normalized eigenfunction. Let $\lambda_i$ and $\bm\phi_i$ be the corresponding eigenvalue and eigenfunction of $\ccalL$ respectively. Then it holds  that
\begin{equation}
\label{eqn:convergence_spectrum}
    \lim_{n\rightarrow \infty } \lambda_i^n = \lambda_i, \quad \lim_{n\rightarrow \infty} \|\bm\phi^{n}_i(x_j) -  \bm\phi_i(x_j)\|=0, j=1,2 \hdots,n
\end{equation}
where the limits are taken in probability.
 \end{theorem}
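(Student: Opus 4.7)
The plan is to decompose the convergence into a deterministic bias coming from the choice of bandwidth $t_n$ (the Euclidean Gaussian kernel is only an approximation of the intrinsic heat kernel at finite scale) and a stochastic variance coming from replacing the manifold integral by an empirical sum over $n$ samples. To set this up, I would first rewrite the action of the graph Laplacian on the samples of a smooth $u \in C^\infty(\ccalM)$ as
\begin{equation}
[\bbL_n \bbP_n u]_i = \frac{1}{n\,t_n(4\pi t_n)^{d/2}} \sum_{j=1}^{n} e^{-\|x_i-x_j\|^2/(4t_n)} \bigl(u(x_i) - u(x_j)\bigr),
\end{equation}
which is exactly the empirical realization of the continuum integral operator
\begin{equation}
L_t^{\ccalM} u(x) := \frac{1}{t(4\pi t)^{d/2}} \int_{\ccalM} e^{-\|x-y\|^2/(4t)} \bigl(u(x) - u(y)\bigr)\, \text{d}\mu(y)
\end{equation}
at $t=t_n$ and $x=x_i$. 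The triangle inequality splits the error into a bias piece $|L_{t_n}^{\ccalM} u(x_i) - \ccalL u(x_i)|$ and a variance piece $|[\bbL_n \bbP_n u]_i - L_{t_n}^{\ccalM} u(x_i)|$ that can be attacked separately.

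For the bias, I would work in geodesic normal coordinates around a fixed $x$, use that the chord length $\|x-y\|$ and the intrinsic distance $d_\ccalM(x,y)$ agree up to an $O(d_\ccalM(x,y)^3)$ correction from the second fundamental form of the embedding, and Taylor-expand $u$ to third order. Standard Gaussian moment calculations then recover $\ccalL u(x)$ as the second-order term while the remainder is $O(t_n)$ for $u \in C^3$; this is the classical heat-kernel asymptotic and I would cite the corresponding differential-geometric lemma rather than redo the expansion. For the variance, the kernel-weighted differences are i.i.d.\ over $j$ conditional on $x_i$, with mean $L_{t_n}^{\ccalM} u(x_i)$ and variance scaling like $t_n^{-(d/2+1)}$ (squaring a Gaussian halves its effective bandwidth). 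A Bernstein concentration bound then yields fluctuations of order $t_n^{-(d/2+1)} n^{-1/2}$. The prescribed scaling $t_n = n^{-1/(d+2+\alpha)}$ is calibrated precisely so that both pieces vanish as $n \to \infty$; the slack $\alpha > 0$ absorbs the logarithmic cost of a union bound over $i = 1, \dots, n$.

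With pointwise operator convergence in hand, the final and hardest step is lifting it to spectral convergence, since $\bbL_n$ and $\ccalL$ act on different Hilbert spaces. My plan is a $\Gamma$-convergence / Courant--Fischer argument. Applying the bias-variance estimates quadratically to sampled test functions $\bbu = \bbP_n u$, the discrete Dirichlet form $\langle \bbL_n \bbu, \bbu\rangle_{\bbG_n}$ and inner product $\langle \bbu, \bbu\rangle_{\bbG_n}$ converge to $\langle \ccalL u, u\rangle_{L^2(\ccalM)}$ and $\|u\|^2_{L^2(\ccalM)}$. For the upper bound on $\lambda_i^n$, sample the span of the first $i$ continuum eigenfunctions and plug into the min-max principle on $L^2(\bbG_n)$, giving $\limsup \lambda_i^n \le \lambda_i$. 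For the matching lower bound, use that any discrete sequence with bounded Dirichlet energy admits a subsequence whose piecewise-constant (or Nyström) extension converges in $L^2(\ccalM)$ to a limit in $H^1(\ccalM)$, by the compactness of the Sobolev embedding; min-max then gives $\liminf \lambda_i^n \ge \lambda_i$. Eigenfunction convergence follows because any weak limit of the normalized sequence $\bm\phi_i^n$ must be a unit-norm eigenfunction of $\ccalL$ at eigenvalue $\lambda_i$, which, in the simple-spectrum case, pins it down up to sign; multiplicities are handled by passing to eigenspace projections.

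The main obstacle is this last spectral lifting. The two operators live on different spaces, so the calibration of interpolation/restriction maps between $L^2(\ccalM)$ and $L^2(\bbG_n)$, and the uniform control of the bias-variance tradeoff over the whole class of test subspaces needed by the min-max principle, is where the proof is easiest to botch and hardest to make rigorous.
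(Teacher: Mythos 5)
The paper does not actually prove this statement: it is imported verbatim as Theorem 2.1 of the cited reference \cite{belkin2006convergence}, so there is no internal proof to compare against line by line. Your blind reconstruction is nonetheless a credible and essentially sound account of how results of this type are established, but it takes a genuinely different route from the cited source. Belkin and Niyogi first pass to the continuum kernel operator $L_t^{\ccalM}$ at \emph{fixed} bandwidth $t$, prove spectral convergence of the empirical operator to $L_t^{\ccalM}$ for fixed $t$ (a law-of-large-numbers/perturbation argument on a fixed compact operator), and only then let $t\to 0$ using the heat-semigroup asymptotics to transfer the spectrum of $L_t^{\ccalM}$ to that of $\ccalL$; the coupling $t_n = n^{-1/(d+2+\alpha)}$ is what lets the two limits be taken jointly. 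Your route instead does the pointwise bias--variance estimate first and then lifts to the spectrum by a Courant--Fischer/$\Gamma$-convergence argument with discrete-to-continuum interpolation maps, which is closer to the later optimal-transport-based proofs (e.g.\ Garc\'ia Trillos--Slep\v{c}ev) and, as you correctly flag, concentrates all the difficulty in the compactness/lower-bound step: bounded discrete Dirichlet energy at scale $t_n$ only controls oscillation at scales above $\sqrt{t_n}$, so the subsequential $H^1$ compactness you invoke is itself a nontrivial theorem, not a direct consequence of the Sobolev embedding. Two smaller points: your variance exponent is pessimistic --- the single-summand second moment scales like $t_n^{-(1+d/2)}$, so the fluctuation is $O(n^{-1/2}t_n^{-(d+2)/4})$ rather than $O(n^{-1/2}t_n^{-(d/2+1)})$, though either bound vanishes under the stated scaling --- and the eigenfunction statement in the theorem is a pointwise-at-the-samples convergence, which in the min-max framework requires converting $L^2(\bbG_n)$ convergence of eigenvectors into uniform control at the sample points, an extra step your sketch does not address. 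Since the paper delegates all of this to the citation, none of these gaps affects the paper itself; they are the places where your proof plan would need the most work to make rigorous.
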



With the definitions of neural networks on discretized manifold and manifold, the output difference can be written as 
 \begin{align}
    \nonumber \|\bm\Phi(\bbH,\bbL_n,\bbP_nf)-\bbP_n \bm\Phi&(\bbH,\ccalL, f))\| = \left\| \sum_{q=1}^{F_L}\bbx_{n,L}^q-\sum_{q=1}^{F_L}\bbP_n f_L^q \right\|\\
     & \leq \sum_{q=1}^{F_L} \left\| \bbx_{n,L}^q- \bbP_n f_L^q \right\|.
 \end{align}
 By inserting the definitions, we have 
 \begin{align}
   \nonumber  &\left\| \bbx_{n,l}^p- \bbP_n f_l^p \right\|\\
     &=\left\| \sigma\left(\sum_{q=1}^{F_{l-1}} \bbh_l^{pq}(\bbL_n) \bbx_{n,l-1}^q \right) -\bbP_n \sigma\left(\sum_{q=1}^{F_{l-1}} \bbh_l^{pq}(\ccalL) f_{l-1}^q\right) \right\|
 \end{align}
 with $\bbx_{n,0}=\bbP_n f$ as the input of the first layer. With a normalized point-wise Lipschitz nonlinearity, we have
  \begin{align}
    \| \bbx_{n,l}^p - \bbP_n f_l^p & \| \leq \left\|  \sum_{q=1}^{F_{l-1}} \bbh_l^{pq}(\bbL_n) \bbx_{n,l-1}^q    - \bbP_n \sum_{q=1}^{F_{l-1}} \bbh_l^{pq}(\ccalL)  f_{l-1}^q\right\|\\
    & \leq \sum_{q=1}^{F_{l-1}} \left\|    \bbh_l^{pq}(\bbL_n) \bbx_{n,l-1}^q    - \bbP_n   \bbh_l^{pq}(\ccalL)  f_{l-1}^q\right\|
 \end{align}
 The difference can be further decomposed as
\begin{align}
   \nonumber   \|    \bbh_l^{pq}(\bbL_n) & \bbx_{n,l-1}^q    - \bbP_n   \bbh_l^{pq}(\ccalL)  f_{l-1}^q \| 
   \\ \nonumber&\leq \|
\bbh_l^{pq}(\bbL_n) \bbx_{n,l-1}^q  - \bbh_l^{pq}(\bbL_n) \bbP_n f_{l-1}^q \\ &\qquad +\bbh_l^{pq}(\bbL_n) \bbP_n f_{l-1}^q  - \bbP_n   \bbh_l^{pq}(\ccalL)  f_{l-1}^q
    \|\\\nonumber
   & \leq \left\|
    \bbh_l^{pq}(\bbL_n) \bbx_{n,l-1}^q  - \bbh_l^{pq}(\bbL_n) \bbP_n f_{l-1}^q
    \right\|
  \\ &\qquad +
    \left\|
    \bbh_l^{pq}(\bbL_n) \bbP_n f_{l-1}^q  - \bbP_n   \bbh_l^{pq}(\ccalL)  f_{l-1}^q
    \right\|
\end{align}
The first term can be bounded as $\| \bbx_{n,l-1}^q - \bbP_nf_{l-1}^q\|$ with the initial condition $\|\bbx_{n,0} - \bbP_n f_0\|=0$. The second term can be denoted as $D_{l-1}^n$. With the iteration employed, we can have
\begin{align}
 \nonumber \|\bm\Phi(\bbH,\bbL_n,\bbP_n f) - \bbP_n \bm\Phi(\bbH,\ccalL,f)\| 
 \leq
 \sum_{l=0}^L \prod\limits_{l'=l}^L F_{l'} D_l^n.
 \end{align}
 Therefore, we can focus on the difference term $D_l^n$, we omit the feature and layer index to work on a general form. Considering that $f$ is $\lambda_M$-bandlimited, we can write the convolution operation as follows.
 \begin{align}
    & \nonumber\|\bbh(\bbL_n)\bbP_n f - \bbP_n\bbh(\ccalL) f\|
   \\& \leq \left\| \sum_{i=1}^M \hat{h}(\lambda_i^n) \langle \bbP_nf,\bm\phi_i^n \rangle_{\bbG_n}\bm\phi_i^n - \sum_{i=1}^M \hat{h}(\lambda_i)\langle f,\bm\phi_i\rangle_{\ccalM} \bbP_n \bm\phi_i  \right\|
     \\ 
     &\nonumber \leq  \left\| \sum_{i=1}^M \hat{h}(\lambda_i^n) \langle \bbP_nf,\bm\phi_i^n \rangle_{\bbG_n}\bm\phi_i^n - \sum_{i=1}^M \hat{h}(\lambda_i) \langle \bbP_nf,\bm\phi_i^n \rangle_{\bbG_n}\bm\phi_i^n\right\| \\
     & \quad +\left\| \sum_{i=1}^M \hat{h}(\lambda_i) \langle \bbP_n f,\bm\phi_i^n \rangle_{\bbG_n} \bm\phi_i^n - \sum_{i=1}^M \hat{h}(\lambda_i) \langle f,\bm\phi_i \rangle_{\ccalM} \bbP_n \bm\phi_i \right\|,\label{eqn:conv-1}
 \end{align}
 where $M=\#\{\lambda_i\leq \lambda_M\}_i$ counts the number of eigenvalues within the bandwidth. 
The first term in \eqref{eqn:conv-1} can be bounded by the $C$-Lipschitz continuity of the frequency response function. From the convergence in probability stated in \eqref{eqn:convergence_spectrum}, we can claim that for each eigenvalue $\lambda_i \leq \lambda_M$, for all $\epsilon_i>0$ and all $\delta_i>0$, there exists some $N_i$ such that for all $n>N_i$, we have
\begin{gather}
 \label{eqn:eigenvalue}   \mathbb{P}(|\lambda_i^n-\lambda_i|\leq \epsilon_i)\geq 1-\delta_i,
 \end{gather}
Letting $\epsilon_i < \epsilon$ with $\epsilon > 0$, with probability at least $\prod_{i=1}^M(1-\delta_i) := 1-\delta$, the first term is bounded as 
\begin{align}
   &\left\| \sum_{i=1}^M (\hat{h}(\lambda_i^n) - \hat{h}(\lambda_i)) \langle \bbP_n f,\bm\phi_i^n \rangle_{\bbG_n} \bm\phi_i^n  \right\|
   \\
   &\qquad \leq \sum_{i=1}^M |\hat{h}(\lambda_i^n)-\hat{h}(\lambda_i)| |\langle \bbP_n f,\bm\phi_i^n \rangle_{\bbG_n}| \|\bm\phi_i^n\|\\
   &\qquad \leq \sum_{i=1}^M C |\lambda_i^n-\lambda_i| \|\bbP_n f\| \|\bm\phi_i^n \|^2\leq MC\epsilon,
\end{align} 
for all $n>\max_i N_i := N$.
 
The second term in \eqref{eqn:conv-1} can be bounded combined with the convergence of eigenfunctions in \eqref{eqn:eigenfunction} as
\begin{align}
   &\nonumber \left\| \sum_{i=1}^M \hat{h}(\lambda_i) \langle \bbP_nf,\bm\phi_i^n \rangle_{\bbG_n}\bm\phi_i^n - \sum_{i=1}^M \hat{h}(\lambda_i) \langle f,\bm\phi_i \rangle_{\ccalM} \bbP_n \bm\phi_i \right\|\\
   &\leq \nonumber \left\|  \sum_{i=1}^M \hat{h}(\lambda_i) \left(\langle \bbP_n f,\bm\phi_i^n\rangle_{\bbG_n}\bm\phi_i^n  - \langle \bbP_nf,\bm\phi_i^n \rangle_{\bbG_n} \bbP_n\bm\phi_i\right)\right\|\\
   &\label{eqn:term1}+ \left\| \sum_{i=1}^M  \hat{h}(\lambda_i) \left(\langle \bbP_n f,\bm\phi_i^n\rangle_{\bbG_n} \bbP_n\bm\phi_i -\langle f,\bm\phi_i\rangle_\ccalM \bbP_n\bm\phi_i \right) \right\|
\end{align}
From the convergence in probability stated in \eqref{eqn:convergence_spectrum}, we can claim that for some fixed eigenfunction $\bm\phi_i$,  for all $\epsilon_i>0$ and all $\delta_i>0$, there exists some $N_i$ such that for all $n>N_i$, we have
\begin{gather}
 \label{eqn:eigenfunction}    \mathbb{P}(|\bm\phi_i^n(x_j) - \bm\phi_i(x_j)|\leq \epsilon_i)\geq 1-\delta_i,\quad \forall x_j\in X .
 \end{gather}
 Therefore, letting $\epsilon_i < \epsilon$ with $\epsilon > 0$, with probability at least $\prod_{i=1}^M(1-\delta_i) := 1-\delta$, for all $n> \max_i N_i := N$, the first term in \eqref{eqn:term1} can be bounded as
\begin{align}
&\nonumber \left\|  \sum_{i=1}^M \hat{h}(\lambda_i) \left(\langle \bbP_n f,\bm\phi_i^n\rangle_{\bbG_n}\bm\phi_i^n  - \langle \bbP_nf,\bm\phi_i^n \rangle_{\ccalM} \bbP_n\bm\phi_i\right)\right\|\\
&\qquad \qquad \qquad \leq \sum_{i=1}^M \|\bbP_n f\|\|\bm\phi_i^n - \bbP_n\bm\phi_i\|\leq M\epsilon,
\end{align}
considering the boundedness of frequency response function. The last equation comes from the definition of norm in $L^2(\bbG_n)$.
The second term in \eqref{eqn:term1} can be written as
\begin{align}
  \nonumber &\left\| \sum_{i=1}^M  \hat{h}(\lambda_i^n) \left(\langle \bbP_n f,\bm\phi_i^n\rangle_{\bbG_n} \bbP_n\bm\phi_i -\langle f,\bm\phi_i\rangle_\ccalM \bbP_n\bm\phi_i \right) \right\| \\
   &\leq \sum_{i=1}^M|\hat{h}(\lambda_i^n)| \left|\langle \bbP_n f,\bm\phi_i^n\rangle_{\bbG_n}  -\langle f,\bm\phi_i\rangle_\ccalM\right|\|\bbP_n\bm\phi_i\|.
\end{align}
Because $\{x_1, x_2,\cdots,x_n\}$ is a set of uniform sampled points from $\ccalM$, based on Theorem 19 in \cite{von2008consistency} we can claim that 
\begin{equation}
    \lim_{n\to \infty} \mathbb{P}\left(\left|\langle \bbP_n f,\bm\phi_i^n\rangle_{\bbG_n}  -\langle f,\bm\phi_i\rangle_\ccalM\right|\leq\epsilon \right)\geq 1-\delta,
\end{equation}
for all $\epsilon>0$ and $\delta>0$.Taking into consider the boundedness of frequency response $|\hat{h}(\lambda)|\leq 1$ and the bounded energy $\|\bbP_n\bm\phi_i\|$. Therefore, we have for all $\epsilon>0$ and $\delta>0$,
\begin{align}
 \nonumber   & \lim_{n\to \infty}\mathbb{P}\left(\left\| \sum_{i=1}^M  \hat{h}(\lambda_i^n) \left(\langle \bbP_n f,\bm\phi_i^n\rangle_{\bbG_n}  -\langle f,\bm\phi_i\rangle_\ccalM \right)\bbP_n\bm\phi_i  \right\|\leq M \epsilon\right)\\
 &\qquad \qquad \qquad\qquad \qquad \qquad \qquad  \qquad \geq 1-\delta.
\end{align}

Combining all these results, we can claim that for all $\epsilon'>0$ and $\delta>0$, there exists some $N$, such that for all $n>N$ we have
\begin{equation}
    \mathbb{P}(\|\bbh(\bbL_n)\bbP_n f - \bbP_n\bbh(\ccalL) f\|\leq \epsilon')\geq 1-\delta.
\end{equation}

With $\lim\limits_{n\rightarrow \infty}D_l^n=0$ in high probability, this concludes the proof. }
 
\end{document}